\documentclass[a4paper]{article}
\usepackage{url}
\usepackage{graphicx}
\usepackage{amssymb,stmaryrd}
\usepackage{float,algorithm,algorithmic} % needed for hyperref + algorithm
\usepackage{amsthm}
\usepackage{hyperref}

\newtheorem{lemma}{Lemma}
\newcommand{\graphwidth}{\textwidth}

\title{On using floating-point computations to help an exact linear arithmetic decision procedure%
\thanks{This work was partially funded by the \href{http://www.agence-nationale-recherche.fr/}{ANR} \href{http://www.agence-nationale-recherche.fr/?NodId=17&lngAAPId=159}{ARPEGE} project ``\href{http://asopt.inrialpes.fr/index.php/Main_Page}{ASOPT}''.}}

\author{\href{http://www-verimag.imag.fr/~monniaux/}{David Monniaux}\\
CNRS / VERIMAG%
\thanks{\href{http://www-verimag.imag.fr/}{VERIMAG} is a joint laboratory of \href{http://www.cnrs.fr/}{CNRS}, \href{http://www.ujf-grenoble.fr/}{Universit\'e Joseph Fourier} and \href{http://www.grenoble-inp.fr/}{Grenoble-INP}.}}
\begin{document}
\maketitle

\newcommand{\soft}[1]{\textsc{#1}}
\newcommand{\ve}[1]{\mathbf{#1}}
\newcommand{\algo}[1]{\textsc{#1}}

\begin{abstract}
We consider the decision problem for quantifier-free formulas whose atoms are linear inequalities interpreted over the reals or rationals. This problem may be decided using satisfiability modulo theory (SMT), using a mixture of a SAT solver and a simplex-based decision procedure for conjunctions. State-of-the-art SMT solvers use simplex implementations over rational numbers, which perform well for typical problems arising from model-checking and program analysis (sparse inequalities, small coefficients) but are slow for other applications (denser problems, larger coefficients).

We propose a simple preprocessing phase that can be adapted to existing SMT solvers and that may be optionally triggered. Despite using floating-point computations, our method is sound and complete --- it merely affects efficiency. We implemented the method and provide benchmarks showing that this change brings a naive and slow decision procedure (``textbook simplex'' with rational numbers) up to the efficiency of recent SMT solvers, over test cases arising from model-checking, and makes it definitely faster than state-of-the-art SMT solvers on dense examples. 
\end{abstract}
\section{Introduction}
Decision procedures for arithmetic theories are widely used for computer-aided verification. A decision procedure for a theory $T$ takes as input a formula of $T$ and outputs a Boolean: whether the formula is satisfiable. For many decidable and potentially useful theories, however, decision procedures are sometimes too slow to process problems beyond small examples. This is for instance the case of the theory of real closed fields (polynomial arithmetic over the real numbers). Excessive computation times arise from two sources: the Boolean structure of the formulas to be decided (propositional satisfiability is currently solved in exponential time in the worst case), and the intrinsic hardness of the theory. In recent years, SAT modulo theory (SMT) techniques have addressed the former source of inefficiency, by leveraging the power of efficient SAT (Boolean satisfiability) solvers to deal with the Boolean structure. SMT solvers combine a SAT solver with a decision procedure for conjunctions of atoms in $T$. If $T$ is linear real arithmetic (LRA), then this decision procedure must decide whether a set of linear inequalities with rational or integer coefficients has rational solutions.

The problem of testing whether a set of linear inequalities has a solution and, if it has, to find a solution that maximizes some linear combination of the variables is known as linear programming and has been considerably studied in operational research. Very efficient implementations exist, whether commercial or not, and are able to solve very large problems. They are not directly applicable to our problems, however, if only because they operate over floating-point numbers and provide in general no assurance that their result is truthful, despite elaborate precautions taken against numerical instabilities. As a result, the decision procedures for LRA in SMT solvers are implemented with rational arithmetic, which is slower than floating-point, especially if coefficients become large, as often happens with dense linear problems: large coefficients force the use of costly extended precision arithmetic. It thus would seem desirable to leverage the speed and maturity of floating-point linear programming systems to enhance exact decision procedures.

This article describes a simple preprocessing phase that can be added, with minimal change, to existing rational simplex implementations used as decision procedures inside SMT solvers. The procedure was implemented on top of a naive and inefficient rational simplex implementation; the resulting procedure rivals recent SMT solvers.

A similar method has been proposed in the operational research field~\cite{Dihflaoui_et_al_SODA03},%
\footnote{We were not aware of this article when we started our work, and we thank Bernd G\"{a}rtner for pointing it to us.}
but there are reasons why it may perform less well for the typical optimization tasks of operational research than for decision tasks. The novelty of this article is the application of this technique as a simple modification of existing SMT algorithms.

\section{Simplex}
SMT solvers need a decision procedure capable of:
\begin{itemize}
\item being used incrementally: adding new constraints to the problem, and removing blocks of constraints, preferably without recomputing everything;
\item telling whether the problem is satisfiable or not;
\item if the problem is unsatisfiable, outputting a (preferably small or even minimal) unsatisfiable subset of the constraints;
\item propagating theory lemmas, if possible at reasonable costs (from a conjunction $C_1 \land \dots \land C_n$, obtain literals $L_1, \dots, L_m$ that are consequences of that conjunction: $C_1 \land \dots \land C_n \Rightarrow L_1 \land \dots \land L_m$).
\end{itemize}

All current SMT solvers seem to decide general linear real arithmetic (as opposed to syntactic restrictions thereof such as difference logic) using the \emph{simplex algorithm}. This algorithm is exponential in the worst case, but tends to perform well in practice; none of the current solvers seem to use a (polynomial-time) interior point method. Our method is a variant of the simplex algorithm; we shall thus first describe the ``conventional'' simplex.

\subsection{Basic simplex}\label{part:basic_simplex}
We shall first give a brief summary on the dual simplex algorithm on which the LRA decision procedures in \soft{Yices}%
\footnote{\url{http://yices.csl.sri.com/}}
\cite{SRI-CSL-06-01,DBLP:conf/cav/DutertreM06} and \soft{Z3}%
\footnote{\url{http://research.microsoft.com/en-us/um/redmond/projects/z3/}}
\cite{DBLP:conf/tacas/MouraB08}
are based. There otherwise exist many excellent textbooks on the simplex algorithm \cite{Dantzig1998,Schrijver98}, though these seldom discuss the specifics of implementations in exact precision or incremental use.

Take a system of linear equations, e.g.
\begin{equation}
\left\{\begin{array}{ll}
x - 2y & \leq 1 \\
-y + 3z & \geq -1 \\
x - 6z & \geq 4
\end{array}
\right.
\end{equation}

The system is first made canonical. Inequalities are scaled so that each left hand side only has integer coefficients with no common factors. Then, each inequality is optionally negated so that the first coefficient appearing (using some arbitrary ordering of the variables) is positive. This ensures that two inequalities constraining the same direction in space (e.g. $-y + 3z \geq -1$ and $2y-6z \geq 3$) appear with the exact same left-hand side. For each left-hand side that is not a variable, a new variable is introduced; the system is then converted into a number of linear equalities and bound constraints on the variables. For instance, the above system gets converted into:
\begin{equation}
\left\{\begin{array}{ll}
\alpha &= x - 2y \\
\beta &= y - 3z \\
\gamma &= x - 6z
\end{array}\right.
\quad
\left\{\begin{array}{ll}
\alpha \leq 1\\
\beta \leq 1\\
\gamma \geq 4
\end{array}\right.
\end{equation}
The problem is thus formulated as deciding whether a product of intervals intersects a linear subspace given by a basis.

The set of variables is partitioned into \emph{basic} and \emph{nonbasic} variables; the number of basic variables stays constant throughout the algorithm. Basic variables are expressed as linear combinations of the nonbasic variables. The main operation of the algorithm is \emph{pivoting}: a basic variable is made nonbasic and a nonbasic variable is made basic, without changing the linear subspace defined by the system of equations. For instance, in the above example, $\alpha = x - 2y$ defines the basic variable in term of the nonbasic variables $x$ and~$y$. If one wants instead $x$ to be made basic, one obtains $x = \alpha + 2y$. The variable $x$ then has to be replaced by $\alpha + 2y$ in all the other equalities, so that the right-hand sides of these equalities only refer to nonbasic variables. This replacement procedure (essentially, replacing a vector $\ve{u}$ by $\ve{v}+k\ve{u}$) is the costly part of the algorithm. A more formal description is given in Alg.~\ref{algo:Pivot}.

Let us insist that pivoting does not change anything to the validity of the problem: both the bounds and the linear subspace stay the same. The idea behind the simplex algorithm is to pivot until a position is found where it is obvious that the problem has a solution, or that it has not.

The algorithm also maintains a vector of ``current'' values for the variables. This vector is fully defined by its projection on the nonbasic variables, since the basic variables can be obtained from them. The current values of the nonbasic variables always stay within their respective bounds. If all current values of the basic variables also fit within the bounds, the current value is a solution to the problem and the algorithm terminates.

If there are basic variables that fall outside the prescribed bounds, one of them (say, $\alpha$) is selected and the corresponding row (say, $x-2y$) is examined. Suppose for the sake of the explanation that the current value for $\alpha$, $c_\alpha$, is strictly greater than the maximal prescribed value $M_\alpha$. One can try making $x$ smaller or $y$ larger to compensate for the difference. If $x$ is already at its lower bound and $y$ at its upper bound, it is impossible to make $\alpha$ smaller and the system is unsatisfiable; the algorithm then terminates. In other words, by performing interval arithmetic over the equation defining $\alpha$ in terms of the nonbasic variables, one shows this equation to be unsatisfiable (replacing the nonbasic variables by their interval bounds, one obtains an interval that does not intersect the interval for~$\alpha$).

Let us now suppose that $x$ is not at its lower bound; we can try making it smaller. $\alpha$ and $x$ are pivoted: $x$ becomes basic and $\alpha$ nonbasic. $\alpha$ is set to its lower bound and $x$ is adjusted accordingly.

It can be shown that if there is a solution to the problem, then there is one configuration where any nonbasic variable is either at its lower or upper bound. Intuitively, if some nonbasic variables are ``in the middle'', then this means we have some ``slack margin'' available, so we should as well use it.
The simplex algorithm appears to move between numerical points taken in an infinite continuous space, but in fact, its current configuration is fully defined by stating which variables are basic and nonbasic, and, for the nonbasic variables, whether they are at their lower or upper bound --- thus the number of configurations is finite.

Remark that we left aside how we choose which basic variable and which basic variable to pivot. It can be shown that certain pivoting strategies (say, choose the suitable nonbasic and basic variables of least index) necessarily lead, maybe after a great number of pivots, to a configuration where either a solution is obtained, or the constraints of a nonbasic variable are clearly unsatisfiable.

The idea of our article is based on the following remark: the simplex algorithm, with a suitable pivoting strategy, always terminates with the right answer, but its execution time can vary considerably depending on the initial configuration. If it is started from a configuration where it is obvious that the system has a solution or does not have one, then it terminates immediately. Otherwise, it may do a great deal of pivoting. Our idea is to use the output of some untrusted floating-point simplex algorithm to direct the rational simplex to a hopefully good starting point.

\subsection{Modifications and extensions}
\label{part:extensions}
The above algorithm is a quick description of ``textbook simplex''. It is not sufficient for obtaining a numerically stable implementation if implemented over floating-point; this is why, after initial trials with a naive floating-point implementation, we moved to a better off-the-shelf implementation, namely \soft{Glpk} (GNU Linear Programming Kit)~\cite{GLPK}.

So far, we have only considered wide inequalities.
A classical method to convert problems with strict inequalities into problems with only wide inequalities is to introduce \emph{infinitesimals}: a coordinate is no longer one single rational number, but a pair of rational numbers $a$ and $b$, denoted $a+b\varepsilon$, with lexicographic ordering. $\varepsilon$ is thus a special number, greater than zero but less than all positive rationals. $x < y$ is rewritten into $x + \varepsilon \leq y$. The ``current'' values in the simplex algorithm are thus pairs $(a,b)$ of rationals, noted $a+b\varepsilon$, the upper bounds are either $+\infty$ or $a+b\varepsilon$, the lower bounds either $-\infty$ either $a+b\varepsilon$. The ``current'' vector, a vector of pairs, can be equivalently represented as a pair of vectors of rationals $(\ve{u},\ve{v})$, written $\ve{u}+\varepsilon \ve{v}$ --- if $\ve{v} \neq \ve{0}$, it designates a point infinitely close to $\ve{u}$ in the direction of $\ve{v}$, if $\ve{v} = \ve{0}$ it only means the point~$\ve{u}$.

In addition to a yes/no answer, decision procedures used in SMT need to provide:
\begin{itemize}
\item In case of a positive answer, a solution point. If only wide inequalities are used, this is given straightforwardly by the ``current'' values. For points of the form $\ve{u}+\varepsilon \ve{v}$, one considers the half-line $\ve{u}+t\ve{v}$, $t > 0$, inject it into the system of inequalities, and solve it for $t$ --- the solution set will be an interval of the form $(0,t_0)$. A suitable solution point is thus $\ve{u}+\frac{t_0}{2} \ve{v}$.

\item In case of a negative answer, a contradiction witness: nonnegative coefficients such that by multiplying the original inequalities by those coefficients, one gets a trivially unsatisfiable inequality ($0 < c$ where $c \leq 0$, or $0 \leq c$ where $c < 0$). This contradiction witness is obtained by using an auxiliary tableau tracking how the equalities $b - \sum_n t_{b,n} n = 0$ defining the basic variables were obtained as linear combinations of the original equalities defined at the initialization of the simplex, as described in \cite[Sec.~3.2.2]{SRI-CSL-06-01}.
\end{itemize}

\section{Mixed floating-point / rational strategy}
\label{part:mixed}
Our procedure takes as input a rational simplex problem in the format described at Sec.~\ref{part:basic_simplex}: a tableau of linear equalities and bounds on the variables. It initializes a floating-point simplex by converting the rational problem: the infinitesimals are discarded and the rationals rounded to nearest. It then calls, as a subroutine, a floating-point simplex algorithm which, on exit, indicates whether the problem is satisfiable (at least according to floating-point computation), and a corresponding configuration (a partition into basic and nonbasic variables, and, for each nonbasic variable, whether it is set to its upper or lower bound). There are several suitable packages available; in order to perform experiments, we implemented our method using the GNU Linear programming toolkit (\soft{Glpk})~\cite{GLPK}.

In order for the resulting mixed procedure to be used incrementally, the floating-point solver should support incremental use. Commercial linear programming solvers are designed for large problems specified as a whole; there may be a large overhead for loading the problem into the solver, even if the problem is small. Instead, we need solvers capable of incrementally adding and withdrawing constraint bounds at minimal cost. \soft{Glpk} supports incremental use, since it keeps the factorization of the basis in memory between calls \cite[p.~20]{GLPK}; this factorization is costly to compute but needs to be computed only if the basis matrix changes: in our case, this basis stays the same.

At this point, if successful, and unless there has been some fatal numeric degeneracy, the floating-point simplex outputs a floating-point approximation to a solution point. However, in general, this approximation, converted back into a rational point, is not necessarily a true solution point. The reason is that simplex produces solution points at a vertex of the solution polyhedron, and, numerically speaking, it is in general impossible to be exactly on that point; in general, the solution point obtained is found to be very slightly outside of the polyhedron when membership is tested in exact arithmetic. It is therefore not feasible to simply take this solution point as a witness.

The rational simplex tableau is then ``forcefully pivoted'', using Algorithm~\ref{algo:ForcedPivot}, until it has the same basic/nonbasic partition as the floating-point output. This amounts to a pass of Gaussian elimination for changing the basis of the linear subspace. This phase can partially fail if the final basic/nonbasic partition requested is infeasible in exact precision arithmetic --- maybe because of bugs in the floating-point simplex algorithm, or simply because of floating-point inaccuracies.

The ``current'' values of the nonbasic variables are then initialized according to the output of the floating-point simplex: if the floating-point simplex selected the upper bound for nonbasic variable $n$ then its current value in the rational simplex is set to its upper bound, and similarly for lower bounds. If the final basic/nonbasic partition produced by the floating-point simplex is infeasible, then there are nonbasic variables of the rational simplex for which no information is known: these are left untouched or set to arbitrary values within their bounds (this does not affect correctness).  The current values of the basic variables are computed using the rational tableau.

The rational simplex is then started. If things have gone well, it terminates immediately by noticing that it has found either a solution, or a configuration showing that the system is unsatisfiable. If things have gone moderately well, the rational simplex does a few additional rounds of pivoting. If things have gone badly, the rational simplex performs a full search.

The rational simplex algorithms are well known, and we have already presented them in Sect.~\ref{part:basic_simplex}. The correctness of our mixed algorithm relies on the correctness of the rational simplex and 
the ``forced pivoting'' phase maintaining the invariant that the linear equalities define the same solutions as those in the initial system.

\begin{algorithm}
\caption{$\algo{Pivot}(\textit{tableau}, b, n)$: pivot the basic variable $b$ and the nonbasic variable $n$. $t_v$ is the line defining basic variable $v$, $t_{v,w}$ is the coefficient of $t_v$ corresponding to the nonbasic variable $w$. The $a_i$ are the optional auxiliary tableau described in~Sec.~\ref{part:extensions}.}
\label{algo:Pivot}
\begin{algorithmic}
\REQUIRE{$b$ basic, $n$ nonbasic, $t_{b,n} \neq 0$}
\STATE{$p := -1/t_{b,n}; t_n := p.t_b$; $t_{n,n} := 0$; $t_{n,b} := -p$; $a_n = p.a_b$}
\STATE{$B := B \cup \{ n \} \setminus \{ b \}$}
\FORALL{$b' \in B$}
  \STATE{$p := t_{b',n}$; $t_{b',n} := 0$; $t_{b'} := t_{b'} + p.t_b$; $a_{b'} := a_{b'} + p.a_b$}
\ENDFOR
\STATE{$t_b := 0$}
\end{algorithmic}

The ``for all'' loop is the most expensive part of the whole simplex algorithm.
Note that, depending on the way the sparse arrays and auxiliary structures are implemented, this loop may be parallelized, each iteration being independent of the others. This gives a performance boost on dense matrices.
\end{algorithm}

\begin{algorithm}
\caption{$\algo{ForcedPivot}(\textit{tableau}, B_f)$: force pivoting until the set of basic variables is $B_f$}
\label{algo:ForcedPivot}
\begin{algorithmic}
\STATE $B := B_i$
\REPEAT
  \STATE $\textit{hasPivotedSomething} := \FALSE$
  \FORALL{$b \in B \setminus B_f$}
    \IF{$\exists n \in B_f \setminus B~ t_{b,n} \neq 0$}
      \STATE Choose $n$ in $B_f \setminus B$ such that $t_{b,n} \neq 0$
      \STATE $\algo{Pivot}(\textit{tableau}, b, n)$ \COMMENT{This involves
        $B:=B\cup\{n\}\setminus\{b\}$}
      \STATE{$\textit{hasPivotedSomething} := \TRUE$}
    \ENDIF
  \ENDFOR
\UNTIL{$\neg \textit{hasPivotedSomething}$}
\RETURN {$B = B_f$}
\end{algorithmic}
\end{algorithm}

\begin{figure}[htb]
\begin{center}
\includegraphics[width=\graphwidth]{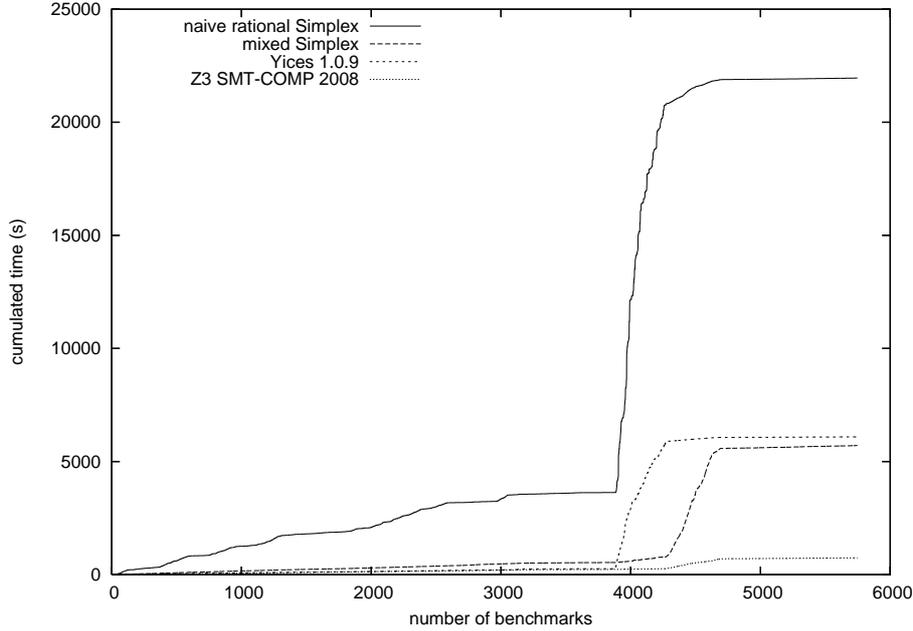}
\end{center}
\caption{\footnotesize Benchmarks on unsatisfiable conjunctions extracted from vSMT verification problems. Even though our implementation of sparse arithmetic and the rational simplex are not up to those in state-of-the-art solvers (as shown by the lack of performance on the ``easy'' examples on the left), and our procedure is not geared towards the sparse problems typical of verification applications, it still performs faster than Yices~1. In 4416 cases out of 5736 (77\%), no additional simplex pivots are needed after \algo{ForcePivot}.
}
\label{fig:benchmarks_Leonardo}
\end{figure}

\begin{figure}[htb]
\begin{center}
\includegraphics[width=\graphwidth]{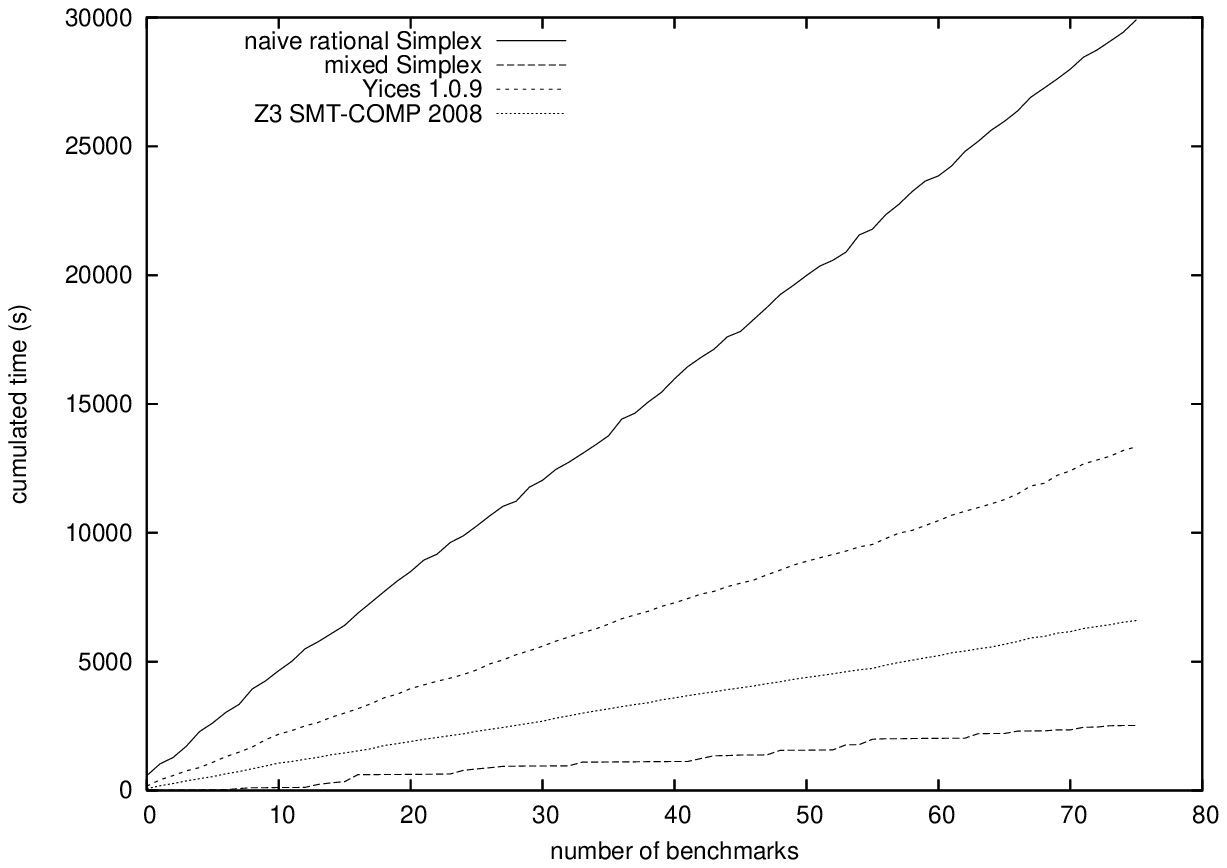}
\end{center}
\caption{\footnotesize Benchmarks on systems of 100 inequalities with 50 variables, where all coefficients are taken uniformly and independently distributed integers in $[-100,100]$. 31 are unsatisfiable, 46 are satisfiable. For each solver, benchmarks sorted by increasing time spent. In 58 cases out of 82 (71\%), no additional simplex pivots are needed after \algo{ForcePivot}.}
\label{fig:benchmarks_random_50_100_100}
\end{figure}

We shall now describe in more detail the ``forced pivoting'' algorithm (Alg.~\ref{algo:ForcedPivot}). This algorithm takes as input a simplex tableau, with associated partition of the set $V$ of variables into basic ($B_i$) and nonbasic variables ($\bar{B}_i$), and a final partition of basic ($B_f$) and nonbasic variables ($\bar{B}_f$). For each basic variable $b$, the tableau contains a line $b = \sum_{n \in \bar{B}} t_{b,n} n$.
Additional constraints are that the tableau is well-formed (basic variables are combination of only the nonbasic variables) and that $|B_i|=|B_f|$ (since $|B|$ is a constant).

Assuming that all arithmetic operations take unit time (which is not true when one considers dense problems, since coefficient sizes quickly grow, but is almost true for sparse problems with small coefficients), the running time of the forced pivoting algorithm is at most cubic in the size of the problem. This motivates our suggestion: instead of performing an expensive and potentially exponential-time search directly with rational arithmetic, we perform it in floating-point, with all the possibilities of optimization of floating-point linear arithmetic offered by modern libraries and compilers, and then perform a cubic-time pass with rational arithmetic.

Not all final configurations are feasible: it is possible to ask \algo{ForcedPivot} to perform an impossible transformation, in which case it returns ``false''. For instance, if the input system of equations is $x = a + b \land y = a + b$,
thus with $B_i = \{ x, y \}$, then it is impossible to move to $B_f = \{ a, b \}$, for there is no way to express $a$ and $b$ as linear functions of $x$ and~$y$. More precisely, we say that a configuration is feasible if it is possible to write the basic variables of the configuration as linear function of the nonbasic variables and obtain the same solutions as the initial system.

\begin{lemma}
\algo{ForcedPivot} succeeds (and returns ``true'') if and only if the final partition defined by $B_f$ is feasible, otherwise it returns ``false''.
\end{lemma}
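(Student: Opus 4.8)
The plan is to phrase everything in terms of the linear subspace $S \subseteq \mathbb{Q}^V$ cut out by the tableau's equalities $b = \sum_{n\in\bar B} t_{b,n}\, n$. Being homogeneous, these make $S$ a vector subspace through the origin, and since the $|B|$ equalities are independent (their coefficient matrix restricted to the $B$-columns is the identity, because a well-formed row of $b$ involves only nonbasic variables), $\dim S = |\bar B|$ --- a quantity that never changes since $|B|$ is constant. I would first record two invariants maintained by \algo{ForcedPivot}. (i) As recalled in Sect.~\ref{part:basic_simplex}, each \algo{Pivot} leaves $S$ unchanged and keeps the tableau well-formed; hence at every step the current nonbasic set $\bar B$ is a coordinate system for $S$, meaning the projection $\pi_{\bar B}\colon S\to\mathbb{Q}^{\bar B}$ is a bijection (onto because any assignment of the nonbasic variables extends via the rows to a point of $S$; one-to-one because two points of $S$ agreeing on $\bar B$ agree on the $b$'s as well). (ii) Every \algo{Pivot}$(b,n)$ called inside \algo{ForcedPivot} has, at that moment, $b\in B\setminus B_f$ and $n\in B_f\setminus B$, so it decreases $|B\setminus B_f|$ by exactly one; as this quantity is $\ge 0$, only finitely many pivots occur and the \textbf{repeat} loop terminates.

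Next I would prove the bridging fact that $B_f$ is feasible if and only if $\pi_{\bar B_f}\colon S\to\mathbb{Q}^{\bar B_f}$ is injective. If the $B_f$-variables can be written as linear functions $f_b$ of the $\bar B_f$-coordinates with solution set $S$, then $v,v'\in S$ with $v|_{\bar B_f}=v'|_{\bar B_f}$ force $v_b=f_b(v|_{\bar B_f})=f_b(v'|_{\bar B_f})=v'_b$ for all $b\in B_f$, whence $v=v'$. Conversely, if $\pi_{\bar B_f}$ is injective then its image is a subspace of dimension $\dim S=|\bar B|=|\bar B_f|$ (using $|B_f|=|B|$), hence all of $\mathbb{Q}^{\bar B_f}$, so $\pi_{\bar B_f}$ is invertible and composing its inverse with each coordinate projection $b$ gives the required $f_b$. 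This disposes of the easy direction at once: if \algo{ForcedPivot} returns ``true'' then $B=B_f$ on exit, the current well-formed tableau already writes every $b\in B_f$ as a linear combination of the $\bar B_f$-variables, and $S$ is unchanged, so $B_f$ is feasible.

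For the converse direction I would assume the loop halts with $B\neq B_f$ (so the algorithm returns ``false'') and argue $B_f$ is infeasible. Since the final pass pivoted nothing, $t_{b,n}=0$ for every $b\in B\setminus B_f$ and every $n\in B_f\setminus B$; equivalently, the row of each $b\in P:=B\setminus B_f$ mentions only nonbasic variables in $D:=\bar B\cap\bar B_f$. Writing $Q:=B_f\setminus B$, we have $\bar B=D\sqcup Q$ and $\bar B_f=D\sqcup P$, and $Q\neq\emptyset$ because $|B_f|=|B|$ gives $|P|=|Q|$ while $B\neq B_f$. Using the bijectivity of $\pi_{\bar B}$, pick the unique $w\in S$ with $w|_D=0$ and $w|_Q\neq 0$; then $w\neq 0$, yet for each $b\in P$ the ``stuck'' condition gives $w_b=\sum_{n\in D}t_{b,n}w_n=0$, so $w|_{\bar B_f}=w|_{D\sqcup P}=0$. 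Hence $\pi_{\bar B_f}$ annihilates a nonzero element of $S$, is not injective, and by the bridging fact $B_f$ is not feasible.

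I expect the only real obstacle to be this last construction --- seeing that ``stuck'' means exactly that the to-be-removed basic rows live entirely among the to-be-kept nonbasic variables, and then using the fact that $\pi_{\bar B}$ is a coordinate system for $S$ to manufacture a nonzero point of $S$ vanishing on all of $\bar B_f$. The rest is bookkeeping about the four-way split $B=(B\cap B_f)\sqcup P$, $\bar B=D\sqcup Q$, together with the already-established invariance of $S$ under pivoting.
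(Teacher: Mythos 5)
Your proof follows the same strategy as the paper's: track the solution space $S$ as an invariant with $\dim S = |\bar B|$, and observe that when $\algo{ForcedPivot}$ gets stuck all remaining rows for $b \in B\setminus B_f$ involve only $\bar B \cap \bar B_f$, so the $\bar B_f$-variables cannot serve as a coordinate system for $S$. You spell out two details the paper leaves implicit --- termination of the \textbf{repeat} loop, and an explicit nonzero element of $S$ annihilated by $\pi_{\bar B_f}$ in place of the paper's dimension-counting sketch --- but the core argument is the same.
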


\begin{proof}
Let $S$ denote the space of solutions of the input system of equations. At all iterations, $S$ is exactly defined by the system of equations, and $\dim S = |\bar{B}|$.
The only way the procedure fails is when $B \neq B_f$ and yet, for all basic variable $b \in B \setminus B_f$ and nonbasic variable $n \in B_f \setminus B$, it is impossible to pivot $b$ and $n$ because $t_{b,n} = 0$. In other words, all such $b$ are linear combinations of the nonbasic variables in $\bar{B} \cap \bar{B}_f$. All variables in $\bar{B}_f$ are thus linear combinations of variables in $\bar{B} \cap \bar{B}_f$, and since we have supposed that $B \neq B_f$, $\bar{B} \cap \bar{B}_f \subsetneq \bar{B}$ thus $|\bar{B} \cap \bar{B}_f| < |\bar{B}| = \dim S$. But then, $|\bar{B}_f| < \dim S$ and $B_f$ cannot be a feasible configuration.
\end{proof}

One can still add a few easy improvements:
\begin{itemize}
\item Before embarking into any simplex, we first test that the original problem does not contain a trivially unsatisfiable tableau row: one where the bounds obtained by interval arithmetic on the right-hand side of $b = \sum_n t_{b,n} n$ the equality have an empty intersection with those for $b$.
\item During the forced pivoting procedure, we detect whether the new equality obtained is trivially unsatisfiable, in which case we terminate immediately.
\item Forced pivots can be done in any order. At the beginning of the procedure, we sort the variables in $B_i \setminus B_f$ according to the number of nonzero coefficients in the corresponding equation. When choosing the basic variable to be pivoted, we take the least one in that ordering. The idea is that the pivoting steps are faster when the equation defining the basic variable to be pivoted has few nonzero coefficients.
\item Similarly, one can pre-sort the variables in $B_f \setminus B_i$ according to their number of occurrences in equations.
\end{itemize}

The SMT procedure may have at its disposal, in addition to the ``exact'' theory test a partial and  ``fast'' theory test, which may err on the side of satisfiability: first test satisfiability in floating-point, and test in exact arithmetic only if negative. The ``fast'' theory test may be used to test whether it seems a good idea to investigate a branch of the search tree or to backtrack, while the ``exact'' results should be used for theory propagation or when it seems a good idea to check whether the current branch truly is satisfiable. Various systems are possible, depending on how the SAT and theory parts interact in the solver~\cite{Faure_et_al_SAT08}.

Note that after an ``exact'' theory check, we have an exact rational simplex tableau corresponding to the constraints, from which it is possible to extract theory propagation information. For instance, if interval analysis on a row $x = 3y+5z$, using intervals from $y$ and $z$, shows that $x < 3$, then one can immediately conclude that in the current SAT search branch, the literal $x \geq 4$ is false.

\section{Implementation and benchmarks}
We implemented the above algorithm into a toy SMT solver.%
\footnote{Benchmarks and implementation are available from\\
\url{http://www-verimag.imag.fr/~monniaux/simplexe/}.}
The SAT part is handled by \soft{Minisat}~\cite{Minisat_SAT03}. Implementing a full SMT solver for LRA was useful for testing the software, against the SMT-LIB examples.%
\footnote{\url{http://goedel.cs.uiowa.edu/smtlib/benchmarks/QF_LRA.tar.gz}}
We however did not use SMT-LIB for benchmarking: the performance of a complete SMT solver is influenced by many factors, including the performance of the SAT solver, the ease of adding new clauses on the fly, etc., outside of the pure speed of the decision procedure.

The floating-point simplex used is the dual simplex implemented by option \verb+GLP_DUALP+ in \soft{Glpk}~\cite{GLPK}, with default parameters.
The rational simplex tableau is implemented using sparse vectors.%
\footnote{More precisely, using \texttt{boost::numeric::ublas::compressed\_vector} from the \soft{Boost} library, available at \url{http://www.boost.org/}.}
Rational numbers are implemented as quotients of two 32-bit numbers; in case of overflow, extended precision rationals from the GMP library \cite{GMP} are used.%
\footnote{\url{http://gmplib.org/}} The reason behind a two-tier system is that GMP rationals inherently involve some inefficiencies, including dynamic memory allocation, even for small numerator and denominator. In many examples arising from verification problems, one never needs to call GMP. The simplex algorithm, including the pivot strategy, is implemented straight from \cite{SRI-CSL-06-01,DBLP:conf/cav/DutertreM06}. It is therefore likely that our system can be implemented as a preprocessor into any of the current SMT platforms for linear real arithmetic, or even those for linear integer arithmetic, since these are based on relaxations to rationals with additional constraints (branch-and-bound or Gomory cuts).

We benchmarked four tools:
\begin{itemize}
\item Our ``naive'' implementation of rational simplex.
\item The same, but with the floating-point preprocessing and forced pivoting phase described in this article (``mixed simplex'').
\item Bruno Dutertre and Leonardo de Moura's (SRI International) Yices 1.0.9
\item Nikolaj Bj{\o}rner and Leonardo de Moura's (Microsoft Research) Z3,
  as presented at SMT-COMP~'08.
\end{itemize}

We used two groups of benchmarks:
\begin{itemize}
\item Benchmarks kindly provided to us by Leonardo de Moura, extracted from SMT-LIB problems. Each benchmark is an unsatisfiable conjunction, used by Z3 as a theory lemma. These examples are fairly sparse, with small coefficients, and rational simplex seldom needs to use extended precision rational arithmetic. Despite this, the performance of the ``mixed'' implementation is better than that of Yices~1 (Fig.~\ref{fig:benchmarks_Leonardo}). Since the source code of neither Yices nor Z3 are available, the reasons why Z3 performs better than Yices~1 and both perform much better than our own rational simplex implementation are somewhat a matter of conjecture. The differences probably arise from both a better sparse matrix implementation, and a better pivoting strategy.

\item Random, dense benchmarks. On these, our mixed implementation performs faster than all others, including the state-of-the-art SMT solvers (Fig.~\ref{fig:benchmarks_random_50_100_100}).
\end{itemize}

On a few examples, \soft{Glpk} crashed due to an internal error (failed assertion). We are unsure whether this is due to a bug inside this library or our misusing it --- in either case, this is rather unsurprising given the complexity of current numerical packages. It is also possible that the numerical phase outputs incorrect results in some cases, because of bugs or loss of precision in floating-point computations. Yet, this has no importance --- the output of the numerical phase does not affect the correction of the final result, but only the time that it takes to reach this result.

\section{Related work}
The high cost of exact-arithmetic in linear programming has long been recognized, and linear programming packages geared towards operational research applications seldom feature the option to perform computations in exact arithmetic. In contrast, most SMT solvers (e.g. \soft{Yices}, \soft{Z3}) or computational geometry packages implement exact arithmetic.

Faure et al. have experimented with using commercial, floating-point SMT solvers such as CPLEX%
\footnote{CPLEX is a professional optimization package geared towards large operational research problem, published by ILOG. \url{http://www.ilog.com/products/cplex/}}
inside an SMT solver~\cite{Faure_et_al_SAT08}. Their approach is different from ours in that they simply sought to reuse the yes/no return value produced by the inexact solver, while we also reuse the basis structure that it produces. Many of the difficulties they report --- for instance, not being able to reuse the output of the inexact solver for theory propagation --- disappear with our system. Some of their remarks still apply: for instance, the floating-point solver should allow incremental use, which means that, for instance, it should not perform LU matrix factorization every time a bound is changed but only at basis initialization.

The idea of combining exact and floating-point arithmetic for the simplex algorithm is not new. G\"artner proposed an algorithm where most of the computations inside the simplex algorithm are performed using ordinary floating-point, but some are performed using extended precision arithmetic (floating-point with an arbitrary number of digits in the mantissa), and reports improvements in precision compared to the industrial solver CPLEX at moderate costs~\cite{Gaertner_SODA98}. It is however difficult to compare his algorithm to ours, because they are geared towards different kinds of problems. Our algorithm is geared towards the decision problem, and is meant to be potentially incremental, and to output both satisfiability and unsatisfiability witnesses in exact rational arithmetic. G\"artner's is geared towards optimization problems in computational geometry, and does not provide optimality or unsatisfiability witnesses.

The work that seems closest to ours seems to be LPex~\cite{Dihflaoui_et_al_SODA03}, a linear programming solver that first obtains a floating-point solution, then recreates the solution basis inside an exact precision solver and then verifies the solution and possibly does additional iterations in exact precision in order to repair a ``wrong'' basis. The original intent of the authors was to apply this technique to optimization, not decision problems, and there are in fact arguments against using this technique for optimization that do not apply to using it for decision. A simplex-based optimization solver actually runs two optimization phases:
\begin{enumerate}
\item A search for a feasible solution (e.g. the algorithm from Sec.~\ref{part:basic_simplex}), which can be equivalently presented as an optimization problem. A new variable $\delta$ is added, and all inequalities $\sum a_{i,j} x_j \leq b$ are replaced by $\sum a_{i,j} x_j - \delta \leq b$. By taking $\delta$ sufficiently large, one can find a solution point of the modified problem. Optimization iterations then reduce $\delta$ until it is nonnegative. Intuitively, $\delta$ measures how much afar we are from a solution to the original problem.

\item Once a solution point to the original problem is found, the original objective function $f$ is optimized.
\end{enumerate}
There are therefore two objective functions at work: one for finding a solution point, and the ``true'' objective function. The floating-point simplex process, optimizes $\delta$, then~$f$. Then, the rational simplex, seeking to ``repair'' the resulting basis, starts optimizing $\delta$ again; then it has to move back to optimizing~$f$. In general, changing objective functions in the middle of an optimization process is bad for efficiency. However, since we are interested in decision, we optimize a single function and the objection does not hold.

This approach was later improved to computing exact solutions for all problems in NETLIB-LP, a popular benchmark library for linear programming problems~\cite{Koch_thefinal_ORL04}. One improvement was to re-run the floating-point simplex in higher precision rather than attempt to repair the ``wrong'' output basis using exact arithmetic --- thus the suggestion to use extended-precision floating-point arithmetic%
\footnote{Available through GNU MP's \texttt{mpf} type or the MPFR library, for instance.}
with increasing precisions until the solution found can be checked in exact precision. This last algorithm was implemented inside the QSopt\_ex solver \cite{DBLP:journals/orl/ApplegateCDE07}.%
\footnote{\url{http://www2.isye.gatech.edu/~wcook/qsopt/index.html}}

These proposals are however different from ours in that they involve more modifications to the underlying exact arithmetic solver. For once, they compute exact precision factorizations of the input matrices, which probably saves time for the large operational research problems that they consider but may have a too high overhead for the smaller problems that arise in SMT applications. In contrast, our algorithm can be adapted as a preprossessing step to the simplex procedure used in existing SMT solvers with hardly any modification to that procedure.

\section{Conclusion and future work}
Our work leverages the maturity and efficiency of floating-point simplex solver (inherent efficiency of hardware floating-point versus rationals, advanced pricing and pivoting strategies...) in order to speed up exact decision procedures in cases where these perform poorly.

The main application of SMT solvers so far has been program or specification verification. Such problems are typically sparse, with small coefficients. On such problems, recent SMT solvers such as Yices and Z3 typically perform well using the simplex algorithm over rational numbers. Performance, however, decreases considerably if they are used over dense problems, since the size of numerators and denominators involved can become prohibitively large. In this case, running our preprocessing phase before embarking on costly extended precision simplex can save significant time. We suggest that our procedure be added to such implementations and activated, as a heuristic, when the rational coefficients become too large.

Allowing SMT solvers to scale beyond program analysis examples may prove useful if they are used for some other applications than program proofs, for instance, formal mathematical proofs. As an example of the use of formal arithmetic proofs outside of program verification, Hales proved Kepler's conjecture using many lemmas obtained by optimization techniques~\cite{Hales05_Kepler}, but mathematicians objected that it was unclear whether these lemmas truly held. As a result, Hales launched a project to formally prove his theorem, including all lemmas obtained using numerical optimization. He proposed transformations of his original linear programming problems into problems for which it is possible to turn the bounds obtained by numerical techniques into numerical bounds~\cite{Hales02_Kepler}. This suggests that there are potential mathematical applications of efficient decision procedures for linear arithmetic.

The applications of our improvements are not limited to the linear theory of the reals or rationals. They also apply to the linear theory of integers, or the mixed linear theory of rationals/reals and integers. In most SMT solvers, decision for integer and mixed linear problems is implemented by relaxing the problem to the real case. If there is no real solution, then there is no integer solution; if there is a solution where the variables that are supposed to be integers are integers, then the integer or mixed problem has a solution. Otherwise, the search is restarted in parts of the state space, delimited by ``branch and bound'' or Gomory cuts. Any efficiency improvement in the rational simplex can thus translate into an improvement to the integer or mixed linear decision procedure.

The reason why we still have to perform expensive rational pivots even after computing a floating-point solution is that the floating-point solutions produced by the simplex algorithm almost always lie outside of the solution polyhedron when tested over exact arithmetic, as explained in Sec.~\ref{part:mixed}. We therefore think of investigating interior point methods, searching both for a satisfiability witness and for an unsatisfiability witness.

\bibliographystyle{plain}
\bibliography{float_simplex}
\end{document}